\newcommand{\ket}[1]{\left| #1 \right\rangle} 
\newcommand{\bra}[1]{\left\langle #1 \right|} 
\newcommand{\braket}[2]{\left\langle #1 \big| #2 \right\rangle}                 
\newcommand{\bracket}[3]{\left\langle #1 \big| #2 \big| #3 \right\rangle}       
\newtheorem{theorem}{Theorem}[section]
\begin{document}

\title{Element Distinctness Revisited}
\author{Renato Portugal\footnote{portugal@lncc.br} \\ 
\\
{\small National Laboratory of Scientific Computing - LNCC} \\
{\small Av. Get\'{u}lio Vargas 333, Petr\'{o}polis, RJ, 25651-075, Brazil}
}

\maketitle
\begin{abstract}
The element distinctness problem is the problem of determining whether the elements of a list are distinct, that is, if $x=(x_1,...,x_N)$ is a list with $N$ elements, we ask whether the elements of $x$ are distinct or not. The solution in a classical computer requires $N$ queries because it uses sorting to check whether there are equal elements. In the quantum case, it is possible to solve the problem in $O(N^{2/3})$ queries. There is an extension which asks whether there are $k$ colliding elements, known as element $k$-distinctness problem. 

This work obtains optimal values of two critical parameters of Ambainis' seminal quantum algorithm [\textit{SIAM J.~Comput.}, 37, 210-239, 2007]. The first critical parameter is the number of repetitions of the algorithm's main block, which inverts the phase of the marked elements and calls a subroutine. The second parameter is the number of quantum walk steps interlaced by oracle queries. We show that, when the optimal values of the parameters are used, the algorithm's success probability is $1-O(N^{1/(k+1)})$, quickly approaching 1. The specification of the exact running time and success probability is important in practical applications of this algorithm.
\end{abstract}

\section{Introduction}

The element distinctness problem has a long history. In classical computing, the optimal lower bound for the model of comparison-based branching programs was obtained by Yao~\cite{Yao88} and classical lower bounds have been obtained in general models in Refs.~\cite{GKHS96,BSSV03}. Quantum lower bounds for the number of queries were obtained by Aaronson and Shi~\cite{AS04} and Ambainis~\cite{Amb05}. Buhrman~\textit{et al.}~\cite{BDHHMSW05} described a quantum algorithm that uses $O(N^{3/4})$ queries. Ambainis' optimal algorithm for the element distinctness problem in $O(N^{2/3})$ queries  firstly appeared in~\cite{Amb04} and later in~\cite{Amb07a}. Ambainis also addressed the element $k$-distinctness problem describing an algorithm in $O(N^{k/(k+1)})$ queries. This algorithm used a new quantum walk framework on a {bipartite graph}, which was generalized by Szegedy~\cite{Sze04a}. The algorithm was also used to build a quantum algorithm for triangle finding by Magniez~\textit{et al.}~\cite{MSS07} and to subset finding by Childs and Eisenberg~\cite{CE05}.

A related problem is the collision problem, where a one-to-one or a two-to-one function $f:\{1,...,N\}\rightarrow\{1,...,N\}$ is given and we have to decide which the function type is. Quantum lower bounds for the collision problem were obtained by Aaronson and Shi~\cite{AS04} and by Kutin~\cite{Kut05}. Brassard~\textit{et al.}~\cite{BHT98b} solved the collision problem in $O(N^{1/3})$ quantum steps achieving the lower bound. If the element distinctness problem can be solved with $N$ queries, then the collision problem can be solved with $O(\sqrt{N})$ queries~\cite{AS04}.

Many important results were obtained after Ambainis' seminal paper. Santha~\cite{San08} surveyed the application of Szegedy's quantum walk to the element distinctness problem and for other related search problems, such as matrix product verification and group commutativity. Childs \cite{Chi10} described the element distinctness algorithm in terms of the continuous-time quantum walk model~\cite{FG98}. Belovs~\cite{Bel12} used learning graphs to present a quantum algorithm for the $k$-distinctness problem in $O\big(N^{1-2^{k-2}/(2^k-1)}\big)$ queries, improving Ambainis' result for $k\ge 3$ and Belovs~\textit{et al.}~\cite{BCJKM13} presented quantum walk algorithms for the element 3-distinctness problem with time complexity $\tilde O(N^{5/7})$ improving the time complexity of $\tilde O(N^{3/4})$ by Ambainis. Rosmanis~\cite{Ros14} addressed quantum adversary lower bounds for the element distinctness problem. Kaplan~\cite{Kap16} used the element distinctness algorithm in the context of quantum attacks against iterated block ciphers. Jeffery~\textit{et al.}~\cite{JMW17} analyzed parallel quantum queries for the element distinctness problem. 

Ambainis' algorithm consists of a main block that is repeated $t_1=\linebreak O((N/r)^{k/2})$ times. The block alternates the action of a conditional phase-flip operator and a subroutine call. The subroutine executes $t_2=O(\sqrt{r})$ steps of a bipartite quantum walk interlaced with oracle calls. The value of $r$ is chosen so that the number of queries is minimized. Ambainis showed that the best value is $r=N^{k/(k+1)}$, which implies that $t_1=O(\sqrt{r})$ and the number of queries is $O(r)$, which is optimal for $k=2$.

In this work, we determine the optimal values of constants $c_1$ and $c_2$ that maximize the success probability of the algorithm, where $t_1=c_1\sqrt{r}$ and $t_2=c_2\sqrt{r}$. We show that the optimal values are $c_1=\pi/4$ and $c_2=\pi/(2\sqrt{k})$ and the success probability is $1-O(1/\sqrt[k]{r})$. In order to do so, we use an instance of the staggered quantum walk~\cite{PSFG16,Por16,Por16b}, which helps to simplify the analysis of the algorithm. The algorithm can be described as a quantum-walk-based search algorithm with multiple marked vertices. At the end, we measure the position of the walker outputting a vertex label, which is a $r$-subset of indices $\{i_1,...,i_r\}$ that has a $k$-collision with high probability, that is, $x_{j_1}=\cdots=x_{j_k}$ for $\{j_1,...,j_k\}\subset \{i_1,...,i_r\}$. The algorithm can be analyzed in full details because the dynamics can be obtained from a reduced $(2k+1)$-dimensional Hilbert space, which simplifies the analysis of the quantum-walk-based search algorithm. This work was motivated by Abreu's master thesis~\cite{Abr17}, who analyzed Ambainis' algorithm in terms of the staggered model.


In Section~\ref{Sec:companalysis}, we describe the staggered quantum walk and the graph on which the quantum walk takes place. Then, we describe the algorithm. In Section~\ref{sec:theoproof}, we formulate a theorem about the optimality of $t_1$, $t_2$, and the algorithm's success probability, and we give a proof of this theorem. In Section~\ref{sec:conc}, we draw our conclusions. The appendix includes a formal definition of the staggered quantum walk and contains a glossary of graph theory terms used in this work.

\section{Description of the Algorithm}\label{Sec:companalysis}

This section describes a quantum algorithm for the element $k$-distinctness problem, which is the following problem. Suppose we have a list $x=(x_1,...,x_N)$ of $N$ elements, is there a set $K=\{{i_1},...,{i_k}\}$ with $k$ distinct marked indices such that $x_{i_1}=\cdots = x_{i_k}$?

\subsection{Quantum Walk Evolution Operator}

Before describing the algorithm, let us give a list of definitions. $[N]$ is the set $\{1,...,N\}$, $r$ is the integer nearest to $N^{\frac{k}{k+1}}$, ${\mathcal{S}}_r$ is the set of all $r$-subsets of $[N]$, ${\mathcal{V}}=\{(S,y)\,:\,S\in{\mathcal{S}}_r,\,y\in [N]\setminus S\} $, and ${\mathcal{H}}=\textrm{span}\left\{\ket{S,y}\,:\,(S,y)\in {\mathcal{V}}\right\}$. Note that $\left|{\mathcal{S}}_r\right|={N\choose r}$ and $\left|{\mathcal{V}}\right|={N\choose {r}}(N-r)$, where ${N\choose r}$ is the {binomial coefficient}.  
Let $\Gamma$ be a graph with vertex set ${\mathcal{V}}$ (the vertices are labeled by $(S,y)\in {\mathcal{V}}$) such that vertices $(S,y)$ and $(S',y')$ are {adjacent} if and only if $S=S'$ or $S\cup \{y\}=S'\cup \{y'\}$. Since the clique graph of $\Gamma$ is 2-colorable, $\Gamma$ is 2-tessellable~\cite{Por16b}. 

The appendix contains definitions of some key concepts of graph theory and, in particular, the definitions of graph tessellation and the staggered model, which are required in the rest of this work. Note that from now on, the description given here moves away from Ambainis' description, which uses a bipartite graph defined formally in the appendix. Graph $\Gamma$, on which the staggered quantum walk is defined, is the line graph of Ambainis' graph. The results obtained in this paper regarding the optimality of $t_1$ and $t_2$ do apply to Ambainis' algorithm. Note that there is an alternate description of the element distinctness algorithm given by Santha~\cite{San08}, who uses Szegedy's quantum walk on a symmetric bipartite graph obtained from the Johnson graph through a duplication process. In principle, the results obtained here do not apply to Santha's version.

A staggered quantum walk on graph $\Gamma$ is defined after describing two tessellations ${\mathcal{T}_\alpha}$ and ${\mathcal{T}_\beta}$ induced by a coloring of the clique graph $K(\Gamma)$. Let us start by defining ${\mathcal{T}_\alpha}$. For each $S\in {\mathcal{S}}_r$ define set $\alpha_S\,=\,\{(S,y)\in {\mathcal{V}}\,:\,y\in [N]\setminus S\}$. We state that $\alpha_S$ is a {clique} of size $(N-r)$. In fact, a subset of vertices is a {clique} if all vertices in the subset are adjacent. By definition, $\alpha_S$ is a subset of vertices and all vertices in $\alpha_S$ are adjacent because they share the same $S$. The size of the clique is $(N-r)$ because the cardinality of set $[N]\setminus S$ is $(N-r)$. It is straightforward to check that the union of $\alpha_S$ for all $S$ in ${\mathcal{S}}_r$ is the vertex set ${\mathcal{V}}$, that is, 
$${\mathcal{V}}=\bigcup_{S\in {\mathcal{S}}_r}{\alpha_S}.$$
Besides, $\alpha_S\cap\alpha_{S'}=\emptyset$ if $S\neq S'$. Then, the set ${\mathcal{T}_\alpha}=\{\alpha_S\,:\,S\in {\mathcal{S}}_r\}$ is a tessellation of $\Gamma$, whose size is $\left|{\mathcal{T}_\alpha}\right|={N\choose r}$. 

For each $S\in {\mathcal{S}}_r$, define the $\alpha$-polygon vector
\begin{equation}\label{ed_ket_alpha}
	\ket{\alpha_{S}}\,=\, \frac{1}{\sqrt{N-r}}\sum_{y\in [N]\setminus S}\ket{S,y}.
\end{equation}
Note that $\braket{{\alpha_{S}}}{{\alpha_{S'}}}=\delta_{SS'}$. Now define
\begin{equation}\label{ed_U_0}
	U_{\alpha}\,=\,2\sum_{S\in {\mathcal{S}}_r}\ket{\alpha_{S}}\bra{\alpha_{S}}-I,
\end{equation} 
which is the unitary and Hermitian operator associated with tessellation $\alpha$.

Let us define tessellation ${\mathcal{T}_\beta}$. Define a partition of the vertex set ${\mathcal{V}}$ induced by the equivalence relation $\sim$, where $(S,y)\sim  (S',y')$ if and only if $S\cup \{y\}=S'\cup \{y'\}$. An equivalence class is defined by $[S,y]=\big\{(S',y')\in {\mathcal{V}}\,:\,(S',y')$ $\sim$  $(S,y)\big\}$ and the quotient set by ${\mathcal{V}}/{\sim}=\left\{[S,y]\,: \, (S,y)\in  {\mathcal{V}}\right\}$. Note that the cardinality of each equivalence class is $(r+1)$ and of the quotient set is ${N\choose {r+1}}$. For each element $[S,y]$ in the quotient set, 
 define $\beta_{[S,y]}\,=\,\{(S',y')\in {\mathcal{V}}\,:\,(S',y')\sim  (S,y)\}$. Set $\beta_{[S,y]}$ is obtained from a cyclic rotation of the elements of $S\cup\{y\}$.
We state that $\beta_{[S,y]}$ is a {clique} of size $(r+1)$. In fact, all vertices $(S',y')$ in $\beta_{[S,y]}$ are adjacent because $S'\cup \{y'\}=S\cup \{y\}$. The size of $\beta_{[S,y]}$ is $(r+1)$ because the cardinality of set $S\cup \{y\}$ is $(r+1)$.

It is straightforward to check that the union of $\beta_{[S,y]}$ for all $[S,y]$ in quotient set ${\mathcal{V}}/{\sim}$ is the vertex set ${\mathcal{V}}$, that is, 
$${\mathcal{V}}=\bigcup_{[S,y]\in{\mathcal{V}}/{\sim}}{\beta_{[S,y]}}.$$
Besides, $\beta_{[S,y]}\cap\beta_{[S',y']}=\emptyset$ if $[S,y]\neq [S',y']$. Then, the set ${\mathcal{T}_\beta}=\{\beta_{[S,y]}\,:\,{[S,y]\in{\mathcal{V}}/{\sim}}\}$ is a tessellation of $\Gamma$, whose size is $\left|{\mathcal{T}_\beta}\right|={N\choose {r+1}}$.  

For each ${[S,y]\in{\mathcal{V}}/{\sim}}$, define the $\beta$-polygon vector 
\begin{equation}\label{ed_ket_beta}
	\ket{\beta_{[S,y]}}\,=\, \frac{1}{\sqrt{r+1}}\sum_{y'\in S\cup \{y\}}\ket{S\cup \{y\}\setminus \{y'\},y'}.
\end{equation}
Note that $\ket{\beta_{[S,y]}}$ is the uniform superposition of the equivalence class that contains $(S,y)$ and that $\braket{\beta_{[S,y]}}{\beta_{[S',y']}}=\delta_{[S,y],[S',y']}$. Define
\begin{equation}\label{ed_U_1}
	U_{\beta}\,=\,2\sum_{[S,y]\in{\mathcal{V}}/{\sim}}\ket{\beta_{[S,y]}}\bra{\beta_{[S,y]}}-I,
\end{equation}
which is the unitary and Hermitian operator associated with tessellation $\beta$.

$U_\alpha$ and $U_\beta$ are local operators in the sense that they move the walker only to adjacent vertices. The evolution operator of a staggered quantum walk on graph $\Gamma$ with unmarked vertices is driven by the unitary operator $U=U_\beta U_\alpha$. The evolution operator must be modified if there are marked vertices. A vertex $(S,y)$ is marked if and only if $K\subseteq S$. The usual recipe to obtain quantum walk search algorithms is to define a new evolution operator $U'=U R$, where $R$ inverts the sign of the marked vertices and acts as the identity on unmarked ones. This recipe does not work in the present case because the argument of the principal eigenvalue of $U$ goes to zero too quickly when $N$ increases. To solve this problem, we have to use the recipe $U'=U^{t_2} R$, where $t_2$ must counteract the decrease of the argument of the principal eigenvalue.

In the next section, we give the full description of the element $k$-distinctness algorithm, which employs two registers. Here we move closer to Ambainis' description but from two key differences: First, the unitary operator $U_{\beta}$, which acts on the first register only, is extended to $U_\beta^{^{\textrm{EXT}}}$, which acts on both registers; and second, the oracle is simpler and acts only on the last ket of the second register. In Ambainis' algorithm, the operator that is equivalent to $U_{\beta}$ acts only on the first register and the oracle must perform some highly non-trivial tasks.

\subsection{The Algorithm}

The algorithm uses two registers.  A vector of the computation basis has the form
$$\ket{S,y}\otimes\ket{x'_1,...,x'_{r+1}},$$
where $(S,y)$ is a vertex label, $x'\in [M]$, and $M$ is an upper bound for the list elements. The Hilbert space has dimension ${N\choose {r}}(N-r)M^{r+1}$ and the memory in qubits is then $O\big(r(\log_2 N+\log_2 M)\big)$.  The notation $x'_i$ denotes a generic value in $[M]$ and $x_i$ denotes the list element in the $i$th position, such as $x_{i_1}$ or $x_y$.

\subsubsection*{Initial Setup} 

The initial state is
\begin{equation*}
	\frac{1}{\sqrt{{N\choose {r}}(N-r)}}\sum_{(S,y)\in {\mathcal{V}}}\ket{S,y}\ket{0,...,0}.
\end{equation*}
The first step is to query each $x_i$ for $i\in S$. Suppose that $S=\{i_1,...,i_r\}$, then the next state is
\begin{equation}\label{ed_ic_is}
	\frac{1}{\sqrt{{N\choose {r}}(N-r)}}\sum_{(S,y)\in {\mathcal{V}}}\ket{S,y}\ket{x_{i_1},...,x_{i_r},0}.
\end{equation}

\subsubsection*{Main Block}

\begin{enumerate}

\item Repeat this step the following number of times:
 $t_1= \Big\lfloor{\frac{\pi}{4} \sqrt{r}}\Big\rceil$, where $\lfloor\,\,\rceil$ is the notation for the nearest integer.

\begin{enumerate}
\item Apply a conditional phase-flip operator ${\mathcal{R}}$ that inverts the phase of $\ket{S,y}$ $\ket{x'_1,...,x'_{r+1}}$ if and only if there is a $k$-collision for distinct indices $K=\{i_1$,...,$i_k\}$ in $S$, that is,
\[
{\mathcal{R}}\ket{S,y} \ket{x'_1,...,x'_{r+1}}=
\begin{cases}
-\ket{S,y} \ket{x'_1,...,x'_{r+1}},  & \text{$k$-collision for } K\subseteq S,\\
\,\,\,\,\ket{S,y} \ket{x'_1,...,x'_{r+1}},  & \text{otherwise.}
\end{cases}
\]

\item Repeat Subroutine 1 the following number of times: 
$t_2= \Big\lfloor\frac{\pi \sqrt{r}}{2\sqrt k}\Big\rceil.$
\end{enumerate}

\item Measure the first register and check whether $S$ has a $k$-collision using a classical algorithm.
\end{enumerate}

\subsubsection*{Subroutine 1}

\begin{enumerate}
\item Apply operator $U_\alpha$ given by (\ref{ed_U_0}) on the first register.

\item Apply oracle ${\mathcal{O}}$ defined by
\begin{equation*}
	{\mathcal{O}}\ket{S,y}\ket{x'_{1},...,x'_{r+1}}=\ket{S,y}\ket{x'_{1},...,x'_{r+1}\oplus x_{y}},
\end{equation*}
which queries element $x_{y}$ and adds $x_{y}$ to $x'_{r+1}$ in the last slot of the second register.

\item Apply operator $U_\beta^{^{\textrm{EXT}}}$, which is an extension of (\ref{ed_U_1}), defined by 
\begin{equation*}\label{ed_U_1_ext}
	U_\beta^{^{\textrm{EXT}}}\,=\,2\sum_{x'_1,...,x'_{r+1}}\sum_{[S,y]\in{\mathcal{V}}/{\sim}}\ket{\beta_{[S,y]}^{x'_1,...,x'_{r+1}}}\bra{\beta_{[S,y]}^{x'_1,...,x'_{r+1}}}-I,
\end{equation*}
where
\begin{equation}\label{ed_ket_beta_ext}
	\ket{\beta_{[S,y]}^{x'_1,...,x'_{r+1}}}\,=\, \frac{1}{\sqrt{r+1}}\sum_{y'\in S\cup \{y\}}\ket{S\cup \{y\}\setminus \{y'\},y'} \ket{\pi({x'_1}),...,\pi({x'_{r+1}})}
\end{equation}
and $\pi$ is a permutation of the slots of the second register induced by the permutation of the indices of the first register.

\item Apply oracle ${\mathcal{O}}$.
\end{enumerate}

\noindent
\textbf{Notes.} (\textit{i}) In Eq.~(\ref{ed_ic_is}), the elements of $S$ and the first $r$ slots of the second register are in one-to-one correspondence. The number of queries in this step is $r$ and it is performed only once.  (\textit{ii})~When the input is state~(\ref{ed_ic_is}), the output of step~2 of Subroutine~1 has the elements of $S$ and the first $r$ slots of the second register in one-to-one correspondence and the last slot of the second register is $x_y$. This one-to-one correspondence is maintained for each term in sum (\ref{ed_ket_beta_ext}) and $\pi(x'_{r+1})=x_{y'}$. (\textit{iii})~The total number of quantum queries is $r+\pi^2r/(4\sqrt{k})$ approximately considering the Initial Setup and the Main Block. After the measurement, $r$ classical queries are necessary.

\section{Main Result}\label{sec:theoproof}

The next theorem improves the value $t_2=\pi\sqrt{r}/(3\sqrt{k})$ given by Ambainis~\cite{Amb07a}, which yields a success probability of 75\% asymptotically.

\begin{theorem} The values $t_1=\pi\sqrt{r}/4$ and $t_2=\pi\sqrt{r}/(2\sqrt{k})$ are asymptotically optimal and the success probability of the algorithm is $1-O(1/r^{1/k})$.
\end{theorem}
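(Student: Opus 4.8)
The plan is to exploit the permutation symmetry that fixes the collision set $K$ setwise and reduce the entire dynamics to the $(2k+1)$-dimensional invariant subspace promised in the introduction. First I would classify every vertex $(S,y)$ by the pair $(a,b)$, where $a=|S\cap K|\in\{0,\dots,k\}$ counts how many colliding indices lie in $S$ and $b=[y\in K]\in\{0,1\}$; since $y\notin S$ forces $a\le k-1$ whenever $b=1$, there are exactly $(k+1)+k=2k+1$ admissible signatures. For each signature I would define the normalized uniform superposition $\ket{\psi_{a,b}}$ over all vertices of that type; these are orthonormal and span a subspace $\mathcal{I}$. The first task is to verify that $U_\alpha$, the effective $U_\beta$ (the oracle-sandwiched $\mathcal{O}U_\beta^{\textrm{EXT}}\mathcal{O}$, which acts as $U_\beta$ on the first register while keeping the data register in one-to-one correspondence), and the phase flip $\mathcal{R}$ all preserve $\mathcal{I}$. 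A short computation shows that within an $\alpha$-clique only $b$ can change at fixed $a$, while within a $\beta$-clique the type can only hop as $(m,0)\leftrightarrow(m-1,1)$; hence the reduced operators couple the basis states along the single chain $\ket{\psi_{k,0}}-\ket{\psi_{k-1,1}}-\ket{\psi_{k-1,0}}-\cdots-\ket{\psi_{0,1}}-\ket{\psi_{0,0}}$, i.e.\ the reduced walk is a quantum walk on a weighted path of $2k+1$ vertices with the unique marked state $\ket{\psi_{k,0}}$ at one end.

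Next I would write the three reduced operators as explicit $(2k+1)\times(2k+1)$ matrices whose nonzero entries are clique overlaps; these are ratios of binomial coefficients that I would expand asymptotically using the value $r$, the nearest integer to $N^{k/(k+1)}$, so that $r/N = N^{-1/(k+1)} = O(1/r^{1/k})$ is the natural small parameter. Two quantities control everything. The overlap of the initial uniform state $\ket{\psi_0}$ with the marked end is $\sqrt{\epsilon}$, where $\epsilon=\binom{N-k}{r-k}/\binom{N}{r}=(r/N)^k(1+o(1))=(1+o(1))/r$, so that $\theta_0=\arcsin\sqrt{\epsilon}\approx 1/\sqrt{r}$. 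The principal angle governing the $U_\beta$-coupling at the marked end, between $\ket{\psi_{k,0}}$ and $\ket{\psi_{k-1,1}}$, is $\arcsin\sqrt{k/(r+1)}\approx\sqrt{k/r}$, because a marked $\beta$-clique of size $r+1$ contains exactly $k$ vertices with $y'\in K$; hence the relevant eigenphase of the walk $W=U_\beta U_\alpha$ near $1$ is $\omega\approx 2\sqrt{k/r}$. I would then diagonalize $W$ on $\mathcal{I}$ and show that $\ket{\psi_0}$ is exactly its $+1$ eigenvector (it is a uniform combination of both the $\ket{\alpha_S}$ and the $\ket{\beta_{[S,y]}}$), while the direction $\mathcal{R}$ produces from $\ket{\psi_0}$ lies, up to $O(r/N)$, in the two-dimensional rotation block of phase $\omega$. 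The requirement that $W^{t_2}$ act as the reflection $2\ket{\psi_0}\bra{\psi_0}-I$ on this block is $t_2\omega=\pi$, which gives $t_2=\pi/(2\sqrt{k/r})=\pi\sqrt{r}/(2\sqrt{k})$, i.e.\ $c_2=\pi/(2\sqrt{k})$.

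With $W^{t_2}$ reduced to the reflection about $\ket{\psi_0}$ and $\mathcal{R}$ the reflection about the unmarked subspace, the main-block operator $\mathcal{R}\,W^{t_2}$ becomes, on the effective two-dimensional space $\mathrm{span}\{\ket{\psi_0},\ket{\psi_{k,0}}\}$, the Grover rotation by angle $2\theta_0$. Iterating $t_1$ times rotates $\ket{\psi_0}$ onto the marked state when $(2t_1+1)\theta_0\approx\pi/2$, i.e.\ $t_1\approx\pi/(4\theta_0)=\pi\sqrt{r}/4$, giving $c_1=\pi/4$. I would read off the success probability as $\sin^2\!\big((2t_1+1)\theta_0\big)$ and argue optimality by noting that this leading-order expression, as a function of $c_1$ and $c_2$, factors into the amplitude-amplification factor (maximized when the rotation reaches $\pi/2$, at $c_1=\pi/4$) and the fidelity of the inner reflection (maximized when the inner rotation equals $\pi$, at $c_2=\pi/(2\sqrt{k})$); any other constants leave a constant-order deficit, as is already visible in Ambainis' choice $c_2=\pi/(3\sqrt{k})$, for which $W^{t_2}$ rotates by $2\pi/3$ rather than $\pi$ and the success probability saturates near $3/4$.

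The last step is to bound the error, and I expect this to be the main obstacle. Rounding $t_1$ and $t_2$ to nearest integers shifts the final angle by $O(\theta_0)=O(1/\sqrt{r})$ and contributes only $O(1/r)$; the dominant loss comes from the fact that $\mathcal{I}$ is genuinely $(2k+1)$-dimensional rather than two-dimensional. Amplitude can leak from the $\{\ket{\psi_{k,0}},\ket{\psi_{k-1,1}}\}$ block into the remaining path vertices through the much weaker $U_\alpha$-coupling $\arcsin\sqrt{1/(N-r)}\approx 1/\sqrt{N}$, and the initial state itself already places probability weight $\sum_{1\le a\le k-1}\binom{k}{a}(r/N)^a(1+o(1))=O(r/N)=O(1/r^{1/k})$ off the two endpoints. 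A naive, block-by-block leakage bound accumulates badly over the two nested loops; the clean route is instead to diagonalize the full reduced operator $\mathcal{R}\,W^{t_2}$ on $\mathcal{I}$ and compare its principal eigenvalue and eigenvector directly to the ideal Grover pair, showing that both deviations are $O(r/N)$. Feeding this into $(\mathcal{R}\,W^{t_2})^{t_1}\ket{\psi_0}$ and projecting onto the marked state then yields the claimed success probability $1-O(1/r^{1/k})$.
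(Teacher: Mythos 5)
Your proposal follows essentially the same route as the paper: the same $(2k+1)$-dimensional invariant subspace indexed by $\bigl(|S\cap K|,\,[y\in K]\bigr)$, the same identification of the critical eigenphase $\phi_k\approx 2\sqrt{k/r}$ of the walk forcing $t_2=\pi/\phi_k$ and of the Grover angle $\theta_0\approx 1/\sqrt{r}$ forcing $t_1=\pi/(4\theta_0)$, and the same resolution of the error analysis you flag as the main obstacle, namely diagonalizing the full reduced iterate $\mathcal{R}\,(U_\beta U_\alpha)^{t_2}$ and comparing its principal eigenpair to the ideal rotation, which is exactly what the paper carries out via its implicit equation for $\lambda$ and the quantity $b=\sum_n \braket{k,0}{\psi_n}^2/(1-\cos t_2\phi_n)$. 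The only substantive difference is that the paper executes the step you defer — it computes the spectrum of $u_\beta u_\alpha$ and the overlaps $\braket{k,0}{\psi_n}$ in closed form, which locates the dominant $O(r^{-1/k})$ loss in the marked state's $O(r/N)$ leakage onto the $\phi_{k-1}$ eigenvectors (rather than in the initial state's weight off the path endpoints, as you suggest) — but your plan would arrive at the same conclusion.
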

\begin{proof} 
Define $(2k+1)$ nonempty sets  $\eta_\ell^j=\{(S,y)\in {\mathcal{V}}:\left|S\cap K \right|=\ell, \,\left|\{y\}\cap K \right|=j \}$. Set $\eta_\ell^j$ is the set of vertices $(S,y)$ such that $S$ has exactly $\ell$ marked indices and $y\not\in K$ if $j=0$ and $y\in K$ if $j=1$. Set $\eta_k^1$ is the empty set. The cardinality of  $\eta_\ell^j$ is ${k\choose {\ell}}{N-k\choose r-{\ell}} \left( N-r-k+{\ell} \right)$ if $j=0$ and ${k\choose {\ell}}{N-k\choose r-{\ell}} \left( k-{\ell} \right)$ if $j=1$. The set of sets $\eta_\ell^j$ is a partition of ${\mathcal{V}}$. The range of $\ell$ is $0\le\ell \le k$ and of $j$ is $0 \le j \le 1$ but must exclude the case $\ell=k$ and $j=1$. Assume that $k<N-r$ so that sets $\eta_\ell^j$ are nonempty.

Define the corresponding unit vectors
\begin{equation}\label{ed_eta_j}
	\ket{\eta_{\ell}^j}\,=\,\frac{1}{\sqrt{\left|\eta_{\ell}^j\right|}}\sum_{({S,y})\in\eta_{\ell}^{j}}\ket{S,y},
\end{equation}
which span a $(2k+1)$-dimensional subspace of the Hilbert space that can be used to analyze the algorithm and to obtain the success probability. In order to do so, we show that the $(2k+1)$-subspace is invariant under the action of $U_\alpha$ and $U_\beta$. Let us obtain matrices $u_\alpha$ and $u_\beta$ of dimension $(2k+1)$ that reproduce the action of $U_\alpha$ and $U_\beta$ on vectors $\ket{\eta_{\ell}^j}$, that is,
\begin{equation*}\label{ed_Uketeta}
U_\alpha\ket{\eta_{\ell}^{j}}\,=\,\sum_{\ell'\,j'}\bracket{\ell',j'}{u_\alpha}{\ell, j}\ket{\eta_{\ell'}^{j'}},
\end{equation*}
and the equivalent one for $u_\beta$, where the set of kets $\ket{\ell,j}$ is the computational basis of a Hilbert space of dimension $(2k+1)$. 

Using~(\ref{ed_ket_alpha}) and~(\ref{ed_eta_j}), we obtain
$$\braket{\alpha_S}{\eta_{\ell}^{j}}=\frac{(1-j)(N-r)+(2j-1)(k-\ell)}{\sqrt{\left|\eta_\ell^j\right|}\sqrt{N-r}}\,\delta_{ |S\cap K|,\ell}$$
and
$$\sum_{\substack{S\in {\mathcal{S}_r}\\ |S\cap K|=\ell}}\ket{\alpha_S}=\frac{1}{\sqrt{N-r}}\left(\sqrt{\left|\eta_\ell^0\right|}\,\ket{\eta_\ell^0} + (1-\delta_{k\ell}) \sqrt{\left|\eta_\ell^1\right|}\,\ket{\eta_\ell^1}\right).$$
Using those results and~(\ref{ed_U_0}), we find the entries of $u_\alpha$, which are
\begin{align}
\bracket{\ell',j'}{u_\alpha}{\ell,j}  \,=\,\, & (-1)^j\left(1-\frac{2\,(k-l)}{{N-r}}\right)\delta_{\ell{\ell'}}\delta_{j{j'}} + \nonumber\\
& 2\,\sqrt{\frac{k-\ell}{N-r}}\sqrt{1-\frac{k-\ell}{N-r}}\,\,\delta_{\ell{\ell'}}\delta_{j\oplus 1,{j'}}.
\end{align}
Analogously, using~(\ref{ed_ket_beta}) and~(\ref{ed_eta_j}), we obtain 
$$\braket{\beta_{[S,y]}}{\eta_{\ell}^{j}}=\frac{(1-j)\,r+(2j-1)\,\ell+1}{\sqrt{\left|\eta_\ell^j\right|}\sqrt{r+1}}\,\delta_{ |(S\cup \{y\})\cap K|,\ell+j}$$
and
$$\sum_{\substack{(S,y)\\ |(S\cup \{y\})\cap K|=\ell}}\ket{\beta_{[S,y]}} = \frac{1}{\sqrt{r+1}}\left(\sqrt{\left|\eta_\ell^0\right|}\,\ket{\eta_\ell^0} + (1-\delta_{\ell 0}) \sqrt{\left|\eta_{\ell-1}^1\right|}\,\ket{\eta_{\ell-1}^1}\right). $$
Using those results and~(\ref{ed_U_1}), we find the entries of $u_\beta$, which are
\begin{align}
\bracket{\ell',j'}{u_\beta}{\ell,j}  \,=\,\, & (-1)^j\left(1-\frac{2\,(\ell+j)}{r+1}\right)\delta_{\ell\ell'}\delta_{jj'} +\nonumber\\ 
&2\,\sqrt{\frac{\ell+j}{r+1}}\,\sqrt{1-\frac{\ell+j}{r+1}}\,\,\delta_{\ell-(-1)^j,\ell'}\delta_{1\oplus j,j'} .
\end{align}

Next step is to show that the conditional phase flip operator ${\mathcal{R}}$ leaves the $(2k+1)$-subspace invariant too. ${\mathcal{R}}$ inverts the phase of $\ket{S,y}$ if and only if $|S\cap K|=k$, that is $(S,y)\in \eta_k^0$. Define a reduced version of ${\mathcal{R}}$, denoted by $R$, in the $(2k+1)$-dimensional Hilbert space as 
\begin{equation}
R=I - 2\ket{k,0}\bra{k,0}.
\end{equation}

State (\ref{ed_ic_is}) at the beginning of the algorithm is a linear combination of $\ket{\eta_{\ell}^j}$ and can be written in the Hilbert space of dimension $(2k+1)$ as
\begin{equation}\label{red_ini_state}
\ket{\psi_0}=\frac{1}{\sqrt{{N\choose {r}}(N-r)}}\sum_{\ell,j}\sqrt{\left|\eta_{\ell}^j\right|}\ket{\ell,j}.
\end{equation}
Since all steps of the algorithm can be obtained from the reduced Hilbert space, the final state of the algorithm right before the measurement can be obtained from
$$\ket{\psi_f}=\left((u_\beta u_\alpha)^{t_2} R\right)^{t_1}\ket{\psi_0}.$$
Note that the oracle changes only the second register and is therefore omitted. Our goal now is to show that the choices for $t_1$ and $t_2$ described in the algorithm are optimal with maximal success probability.

The probability of finding a marked vertex as a function of $t$ is 
\begin{equation}\label{ed_p_succ_t}
p(t)=\left|\bracket{k,0}{\left(u^{t_2}R\right)^{t}}{\psi_0}\right|^2,
\end{equation}
where $u=u_\alpha u_\beta$. Let $e^{\pm i\lambda}$ be the eigenvalues of $\left(u^{t_2}R\right)$ that are nearest to 1 and let $\ket{\lambda}$ and its complex-conjugate $\ket{\lambda}^*$ be the corresponding eigenvectors. Assume for now that the contribution of the other eigenvalues and eigenvectors to the calculation of $p(t)$ goes to zero when $N$ increases. In this case
\begin{equation}\label{ed_p_succ_t_epsionl}
p(t)=\left|e^{i\lambda\, t}\braket{k,0}{\lambda}\braket{\lambda}{\psi_0}+e^{-i\lambda\, t}\braket{k,0}{\lambda}^*\braket{\lambda}{\psi_0}^* +\epsilon\,\right|^2,
\end{equation}
where $\lim_{N\rightarrow \infty}|\epsilon|=0$. Suppose that vectors $\ket{\psi_{\pm n}}$ for $0\le n \le k$ are unit eigenvectors of $u$ with eigenvalues  $e^{ i\phi_{\pm n}}$, where $\ket{\psi_{-n}}=\ket{\psi_{n}}^*$, $\phi_{-n}=-\phi_n$, and $\braket{k,0}{\psi_{\pm n}}>0$. $\ket{\psi_0}$ is given by~(\ref{red_ini_state}) and has positive entries.

Using $\bracket{\psi_{n}}{u^{t_2} R}{\lambda}=e^{ i\lambda}\braket{\psi_{n}}{\lambda}$, we obtain  
\begin{equation}\label{psi_lambda}
 \braket{\psi_{n}}{\lambda}\,=\,\frac{2\braket{k,0}{\lambda}\braket{k,0}{\psi_{n}}}{1-{e}^{{i}(\lambda-{t_2 \phi_{n}})}},
\end{equation}
which is valid if $\lambda\neq {t_2 \phi_{n}}$. Substituting this result in
\begin{equation}
\braket{k,0}{\lambda}\,=\,\sum_{n=-k}^k \braket{k,0}{\psi_{n}}\braket{\psi_{n}}{\lambda},
\end{equation}
we obtain
\begin{equation*}
 \sum_{n} \frac{2\braket{k,0}{\psi_n}^2}{1-{e}^{{i}(\lambda-{t_2 \phi_{n}})}} =1.
\end{equation*}
Using that $2/(1-{e}^{{i}a})=1+i\sin a/(1-\cos a)$, the imaginary part of the above equation is
\begin{equation}\label{exactsum_kl}
 \sum_{n} \braket{k,0}{\psi_n}^2\frac{\sin({\lambda-{t_2 \phi_{n}}})}{1-\cos(\lambda-{t_2 \phi_{n}})} =0.
\end{equation} 
Suppose that $\lambda\ll t_2 \phi_{n}$ for $n>0$ when $N\gg 1$. We will check the validity of this assumption later. 
Expanding in Taylor series and discarding terms $O(\lambda^2)$, we obtain
\begin{equation}\label{eq:lambdaequation}
	 \lambda=\frac{\braket{k,0}{\psi_0}}{\sqrt{b}},
\end{equation}
where
\begin{eqnarray}\label{eq:a} 
		b = \sum_{n=1}^k\frac{\braket{k,0}{\psi_{n}}^2}{1-\cos(t_2\phi_n)}.
\end{eqnarray}

Now let us find $\phi_n$ and $\braket{k,0}{\psi_{n}}$. Using the entries of $u_\alpha$ and $u_\beta$, we obtain the characteristic polynomial of $u=u_\beta u_\alpha$, which is
$$\frac{|\lambda I-u|}{(r+1)^k(N-r)^k}=(\lambda-1)\prod_{n=1}^k \left(\lambda^2-2\,\lambda\cos\phi_n+1\right),$$
where
\begin{equation}\label{cosphi}
\cos\phi_n=1-{\frac {2\,n\left(N - n+1 \right)}{ \left( r+1 \right)  \left( N-r \right) }}
.
\end{equation}
The eigenvectors of $u$ can be found explicitly, but it is easier to calculate directly $\braket{k,0}{\psi_{n}}$, which for $n=0$ is
\begin{equation}\label{braketkpsi0}
	\braket{k,0}{\psi_0}\,=\,\prod _{i=0}^{k-1}\sqrt{{\frac {r-i}{N-i}}},
\end{equation}
for $0<n< k$ is
\begin{equation}\label{braketkpsik}
	\braket{k,0}{\psi_n}\,=\,\frac{1}{\sqrt{2}}\sqrt{{k\choose n}}\,\sqrt{{\frac {\prod _{i=0}^{n-1}(N-r-i)\prod _{i=n}^{k-1}(r-i)}{\prod _{i=n-1}^{2\,n-2}(N-i)\prod _{i=2\,n}^{k+n-1}(N-i)}}},
\end{equation}
and for $n=k$ is
\begin{equation}\label{braketkpsin}
	\braket{k,0}{\psi_k}\,=\,\frac{1}{\sqrt{2}}\sqrt{{\frac {\prod _{i=0}^{k-1}(N-r-i)}{\prod _{i=k-1}^{2\,k-2}(N-i)}}}.
\end{equation}

Next step in the calculation of $p(t)$ is the term $\braket{k,0}{\lambda}$. Substituting (\ref{psi_lambda}) into $\sum_{n} \left|\braket{\psi_n}{\lambda}\right|^2=1$, we have
$$\frac{1}{\left|\braket{k,0}{\lambda}\right|^2}=4\sum_{n=-k}^k\frac{\braket{k,0}{\psi_{n}}^2}{\left|1-{e}^{{i}(\lambda-{t_2 \phi_{n}})}\right|^2}.$$
Using identity $\left|1-e^{ia}\right|^2=2\,(1-\cos a)$, redefining $\ket{\lambda}$ by choosing a multiplicative unit constant such that $\braket{k,0}{\lambda}>0$, expanding in Taylor series, and keeping the dominant term, we obtain
\begin{equation}\label{braketklambda}
\braket{k,0}{\lambda}=\braket{k,0}{\lambda}^*=\frac{1}{2\sqrt{2}\sqrt{b}}.
\end{equation}
Next term is $\braket{\lambda}{\psi_0}$. Substituting $n=0$ into (\ref{psi_lambda}), using (\ref{braketklambda}), (\ref{braketkpsi0}), and (\ref{eq:lambdaequation}), we obtain 
\begin{equation}\label{braketpsi0lambda}
 \braket{\psi_0}{\lambda}=\frac{i}{\sqrt{2}} + O\left(1/\sqrt{r}\right).
\end{equation}
In fact, the next term in the asymptotic expansion of $\braket{\psi_0}{\lambda}$ would be $\lambda/\sqrt{8}$, which is $O\left(1/\sqrt{r}\right)$. Substituting (\ref{braketklambda}) and (\ref{braketpsi0lambda}) into $p(t)$, we obtain
\begin{equation}\label{p_of_t}
p(t) \,=\, \frac{1}{4\,b}\,\sin^2 \lambda t.
\end{equation}

The maximal value of the success probability is obtained by taking $t=\pi/(2\lambda)$, which implies that $p_{\textrm{succ}}=1/(4b)$. To maximize the success probability, we have to minimize $b$. Using Eq.~(\ref{eq:a}), we see that the only free parameter in $b$ is $t_2$. So, we choose $t_2$ that minimizes $b$. Eqs.~(\ref{braketkpsi0}) to~(\ref{braketkpsin}) show that $\braket{k,0}{\psi_n}^2=O\left((r/N)^{k-n}\right)$, which goes to 0 when $n<k$, and  $\braket{k,0}{\psi_k}^2=O(1)$. The expression of $b$ given by (\ref{eq:a}) is a sum of terms which is dominated by the term
$$\frac{\braket{k,0}{\psi_{k}}^2}{1-\cos(t_2\phi_k)}.$$
The optimal value of $t_2$ is the one that minimizes this term, that is, maximizes $(1-\cos(t_2\phi_k))$, which implies $t_2={\pi}/{\phi_k}$. Using (\ref{cosphi}), the asymptotic expansion of ${\pi}/{\phi_k}$ yields
\begin{equation}\label{finalt2}
t_2=\frac{\pi}{\phi_k}=\frac{\pi\sqrt{r}}{2\sqrt{k}}+O(1),
\end{equation}
and then the success probability is
\begin{equation}
	p_\textrm{succ} = 1 - \frac{k}{r^{\frac{1}{k}}}\cot^2\left(\frac{\pi}{2}\sqrt{\frac{k-1}{k}}\right) + O\left(r^{-\frac{2}{k}}\right).
\end{equation}
Now let us find the running time. Eq.~(\ref{p_of_t}) shows that the probability as a function of time is the square of a sinusoidal function. The optimal running time is the first value of $t$ that maximizes $\sin(\lambda t)$, which is $t_1=\pi/(2\lambda)$. Using the value of $t_2$ given by (\ref{finalt2}), the expression~(\ref{eq:lambdaequation}) for $\lambda$, and the dominant term of $b$, the asymptotic expansion of $\pi/(2\lambda)$ yields
\begin{equation}
t_1 \,=\, \frac{\pi}{2\,\lambda} \,=\, \frac{\pi}{4}\, \sqrt{r} + O\left(r^{\frac{k-2}{2k}}\right).
\end{equation}

It is still missing to check that $\lambda\ll t_2\phi_n$ and $\lim_{N\rightarrow \infty}|\epsilon|=0$ (see (\ref{ed_p_succ_t_epsionl})). Using (\ref{cosphi}), we have $\phi_n=O(1/\sqrt{r})$. Using (\ref{finalt2}), we obtain $t_2\phi_n=O(1)$. Then, $\lambda\ll t_2\phi_n$  when $N$ is large. On the other hand, (\ref{braketpsi0lambda}) implies that
$$\left|\braket{\psi_0}{\lambda}\right|^2+\left|\braket{\psi_0}{\lambda}^*\right|^2=1+ O\left(1/\sqrt{r}\right).$$
This means that the initial state $\ket{\psi_0}$ lies in the subspace spanned by $\ket{\lambda}$ and $\ket{\lambda}^*$ in the limit $N\rightarrow\infty$, that is, $\lim_{N\rightarrow \infty}|\epsilon|=0$. This completes the proof of the theorem.
\end{proof}

\section{Conclusions}\label{sec:conc}

In this work, we have obtained the optimal values of the two critical parameters of Ambainis' algorithm~\cite{Amb07a}. The first parameter is the number of repetitions of the steps of the main block, which includes the conditional phase-flip inversion and the subroutine call. The second parameter is the number of quantum walk steps interlaced by oracle queries. After obtaining the optimal values of the critical parameters, which are $t_1=\pi\sqrt{r}/4$ and $t_2=\pi\sqrt{r}/(2\sqrt{k})$, we have shown that the success probability of the algorithm is $1-O(1/r^{1/k})$ improving Ambainis' result, which attains a success probability of $3/4-O(1/r^{1/k})$, due to a non-optimal choice of the second critical parameter. The total number of quantum queries with optimal values for the parameters is $\big(\pi^2/(4\sqrt{k})+1\big)\,r+O\big(r^{(k-1)/k}\big)$ and, at the end, $r$ classical queries are necessary.


The dynamics of the algorithm can be described in a reduced $(2k+1)$-dimensional Hilbert space. Using the staggered quantum walk model, we were able to obtain the reduced version of the quantum-walk evolution operator and to calculate the spectral decomposition. The analysis of quantum-walk-based search algorithms using a reduced version of the evolution operator has been widely used in literature, such as Refs.~\cite{SKW03,AKR05,Tul12}, which developed many calculation tools that were used in this work and were reviewed in Ref.~\cite{Por13}.

As a follow-up, we are interested in describing quantum circuits for the element distinctness algorithm and in determining precisely the number of gates in order to find a prefactor for the known time complexity bound of $O(N^{2/3}\ln N)$~\cite{Amb07a}.

\section*{Appendix}\label{appendixA}

In this appendix, we define the graph theory terms used in this work~\cite{Die12,BLS99,Har69} and the staggered quantum walk~\cite{PSFG16,Por16,Por16b}.

A \textit{simple and undirected graph} (simply \textit{graph}) $\Gamma(V,E)$ is defined by a set $V(\Gamma)$ of vertices or nodes and a set $E(\Gamma)$ of edges so that each edge links two vertices and two vertices are linked by at most one edge. Two vertices linked by an edge are called \textit{adjacent}. Two edges  that share a common vertex are also called adjacent. A subgraph $\Gamma'(V',E')$, where $V'(\Gamma')\subset V(\Gamma)$ and $E'(\Gamma')\subset E(\Gamma)$, is an \textit{induced subgraph} of $\Gamma(V,E)$ if it has exactly the edges that appear in $\Gamma$ over the same vertex set. If two vertices are adjacent in $\Gamma$ they are also adjacent in the induced subgraph. 
A \textit{bipartite graph} is a graph whose vertex set $V$ is the union of two disjoint sets $X$ and $X'$ so that no two vertices in $X$ are adjacent and no two vertices in $X'$ are adjacent. 
A \textit{clique} is a subset of vertices of a graph such that its induced subgraph is complete. A \textit{maximal clique} is a clique that cannot be extended by including one more adjacent vertex, that is, it is not contained in a larger clique. A clique of size $d$ is called a $d$-\textit{clique}. A clique can have one vertex. A \textit{partition of the vertex set into cliques} is a collection of vertex-disjoint cliques, whose union is the vertex set. Some references in graph theory use the term ``clique'' as a synonym of \textit{maximal clique}. We avoid this notation here.
A \textit{clique graph} $K(\Gamma)$ of a graph $\Gamma$ is a graph such that every vertex represents a maximal clique of $\Gamma$ and two vertices of $K(\Gamma)$ are adjacent if and only if the underlying maximal cliques in $\Gamma$ share at least one vertex in common.
A \textit{line graph} (or \textit{derived graph} or \textit{interchange graph}) of a graph $\Gamma$ (called \textit{root graph}) is another graph $L(\Gamma)$ so that each vertex of $L(\Gamma)$ represents an edge of $\Gamma$ and two vertices of $L(\Gamma)$ are adjacent if and only if their corresponding edges share a common vertex in $\Gamma$. 
A \textit{proper coloring} or simply \textit{coloring} of a loopless graph is a labeling of the vertices with colors such that no two vertices sharing the same edge have the same color. A $n$-\textit{colorable} graph is the one whose vertices can be colored with at most $n$ colors so that no two adjacent vertices share the same color. This concept can be used for edges and other graph structures.

A \textit{graph tessellation} ${\mathcal{T}}$ is a partition of the vertex set into cliques, that is, there are disjoint cliques $c_1$, ..., $c_{\left|{\mathcal{T}}\right|}$ such $\cup_{\ell=1}^{\left|{\mathcal{T}}\right|} c_\ell=V(\Gamma)$, where $\left|{\mathcal{T}}\right|$ is the tessellation size. An element $c_\ell$ of the tessellation is called a \textit{polygon} (or \textit{tile}). An edge \textit{belongs} to the tessellation ${\mathcal{T}}$ if and only if its endpoints belong to the same polygon in ${\mathcal{T}}$. 
The set of edges belonging to ${\mathcal{T}}$ is denoted by ${\mathcal{E}}({\mathcal{T}})$.
A \textit{graph tessellation cover} of size~$n$ is a set of $n$~tessellations ${\mathcal{T}}_1,...,{\mathcal{T}}_n$, whose union is the edge set, that is, $\cup_{j=1}^n\,{\mathcal{E}}({\mathcal{T}}_j)=E(\Gamma)$. A graph is called $n$\textit{-tessellable} if there is a tessellation cover of size at most $n$. The \textit{tessellation cover number} is the size of a smallest tessellation cover of $\Gamma$. 
A graph $\Gamma$ is 2-tessellable if and only if $K(\Gamma)$ is 2-colorable~\cite{Por16b}. The definition of \textit{graph tessellation cover} was introduced by Portugal \textit{et al.}~\cite{PSFG16}. 

In its simplest form, a \textit{staggered quantum walk} on a graph $\Gamma(V,E)$ with a \textit{graph tessellation cover} ${\mathcal{C}}=\{{\mathcal{T}}_1,...,{\mathcal{T}}_n\}$ is a quantum walk driven by the unitary operator $U_{\mathcal{C}}=U_n\cdots U_2\cdot U_1$, where $U_j$ is associated with tessellation ${\mathcal{T}}_j$ and is defined by
$$U_j\,=\,2\sum_{\ell=1}^{\left|{\mathcal{T}}_j\right|} \ket{c_\ell^{(j)}}\bra{c_\ell^{(j)}} - I,$$
where
$$ \ket{c_\ell^{(j)}} \,=\, \frac{1}{\sqrt{\left|c_\ell^{(j)}\right|}}\sum_{v\in c_\ell^{(j)}}\ket{v}.$$
The dimension of Hilbert space is $|V|$ and the computational basis is indexed by the vertices of $\Gamma$.

\textit{Ambainis' graph}~\cite{Amb07a} is a bipartite graph with ${N\choose r}+{N\choose {r+1}}$ vertices. The vertices of the first set are $r$-subsets of $[N]$ and of the second set are $(r+1)$-subsets. A vertex $v_1$ in the first set is adjacent to a vertex $v_2$ in the second set if and only if $|v_1\cap v_2|=r$. The graph $\Gamma$ defined in Section~\ref{Sec:companalysis}, on which the 2-tessellable staggered quantum walk takes place, is the line graph of Ambainis' graph, which on the other hand is the clique graph of $\Gamma$. $K(\Gamma)$ is 2-colorable because Ambainis' graph is bipartite.

\section*{Acknowledgements}
The author acknowledges financial support from CNPq and thanks Raqueline Santos for useful comments.




\end{document}